\newtheorem{theorem}{Theorem}
\newtheorem{remark}{Remark}
\newcommand{\eps}{\varepsilon}
\title[]{On model-independent pricing/hedging using shortfall risk and quantiles}
\author[]{Erhan Bayraktar} \thanks{This research was supported in part by the National Science Foundation under grants DMS 0906257 and DMS 1118673.}  
\address{Department of Mathematics, University of Michigan}
\email{erhan@umich.edu}
\author[]{Zhou Zhou}
\address{Department of Mathematics, University of Michigan}
\email{zhouzhou@umich.edu}
\date{\today}
\keywords{model-independent hedging/pricing, marginal constraints, shortfall risk, quantile hedging, optimal transport.}
\begin{document}
\maketitle

\begin{abstract}
We consider the pricing and hedging of exotic options in a model-independent set-up using \emph{shortfall risk and quantiles}. We assume that the marginal distributions at certain times are given. This is tantamount to calibrating the model to call options with discrete set of maturities but a continuum of strikes. In the case of pricing with shortfall risk, we prove that the minimum initial amount is equal to the super-hedging price plus the inverse of the utility at the given shortfall level. In the second result, we show that the quantile hedging problem is equivalent to super-hedging problems for knockout options. These results generalize the duality results of \cite{MR1780323, MR1842286} to the model independent setting of \cite{Beiglbock1}.
\end{abstract}

\section{Set-up and the main results}
We will follow the setting in \cite{Beiglbock1}. Assume that in the market, there is a single risky asset  at discrete times $t=1,\dotso,n$. Let $S=(S_t)_{t=1}^n$ be the canonical process on the path space  $\mathbb{R}_+^n$, i.e, for  $(s_1,\dotso,s_n) \in \mathbb{R}_+^n$ we have that $S_i(s_1,\dotso,s_n)=s_i$. The random variable $S_i$ represents the price of the risky asset  at time $t=i$. We denote the current spot price of the asset as $S_0=s_0$. In addition, we assume that our model is calibrated to a continuum of call options with payoffs $(S_i-K)^+,\ K\in\mathbb{R}_+$ at each time $t=i$, and price 
$$\mathcal{C}(i,K)=\mathbb{E}^\mathbb{Q}\left[(S_i-K)^+\right].$$
It is well-known that knowing the marginal $S_i$ is equivalent to knowing the prices $\mathcal{C}(i,K)$ for all $K\geq0$; see \cite{BL78}. 
Hence, we will assume that the marginals of the stock price $S=(S_i)_{i=1}^n$ are given by $S_i\sim\mu_i$, where $\mu_1,\dotso,\mu_n$ are probability measures on $\mathbb{R}_+$. Let 
$$\mathcal{M}:=\{\mathbb{Q} \text { probability measure on } \mathbb{R}_+^n:\ S=(S_i)_{i=1}^n \text{ is a } \mathbb{Q}-\text{martingale};$$
$$\text{ for } i=1,\dotso,n,\ S_i \text{ has marginal } \mu_i \text{ and mean } s_0\}.$$
We make the standing assumption that $\mathcal{M}\neq\emptyset$.

Let us consider the semi-static trading strategies consisting of the sum of a static vanilla portfolio and a dynamic strategy in the stock We will by $\Delta$ the predictable process corresponding to the holdings on the stock. More precisely, the semi-static strategies generate payoffs of the form:
$$\Uppsi_{(u_i),(\Delta_j)}(s_1,\dotso,s_n)=\sum_{i=1}^n u_i(s_i)+\sum_{j=1}^{n-1}\Delta_j(s_1,\dotso,s_j)(s_{j+1}-s_j),\ s_1,\dotso,s_n\in\mathbb{R}_+,$$
where the functions $u_i:\ \mathbb{R}_+\rightarrow\mathbb{R}$ are $\mu_i$ integrable for $i=1,\dotso,n$, and the functions $\Delta_j:\ \mathbb{R}_+^j\rightarrow\mathbb{R}$ are assumed to be bounded measurable for $j=1,\dotso,n-1$. We denote $\Delta=(\Delta_1,\dotso,\Delta_{n-1})$.

Now we are ready to state our main results. We want to point out that similar results also hold for the continuous time version within the framework of \cite{Dolinsky1}, and their proofs are the same as  in the discrete time case presented here.
\begin{theorem}[Pricing using Shortfall Risk]
Let $\Phi: \mathbb{R}_+^n\rightarrow\mathbb{R}$ be an upper semi-continuous function such that
$$\Phi(s_1,\dotso,s_n)\leq K\cdot(1+s_1+\dotso+s_n)$$
for some constant $K$. Let $U$ be a nondecreasing concave function. Let $\alpha=U(\beta)$. If $U$ is strictly increasing around a neighborhood of $\beta$, then the following duality holds:
\begin{equation*}
\begin{split}
C:=&\inf\left\{\sum_{i=1}^n\mathbb{E}_{\mu_i}[u_i]: \exists\Delta\ \text{s.t. } \inf_{\mathbb{P}\in\mathcal{P}}\mathbb{E}^{\mathbb{P}}[U(\Uppsi_{(u_i),(\Delta_j)}-\Phi)]\geq\alpha\right\} \\
=&\sup_{\mathbb{Q}\in\mathcal{M}}\mathbb{E}^{\mathbb{Q}}\Phi+U^{-1}(\alpha)=:D. 
\end{split}
\end{equation*}
where $\mathcal{P}$ is any set of probability measures on $\mathbb{R}_+^n$ containing $\mathcal{M}$. Moreover, the supremum is attained.
\end{theorem}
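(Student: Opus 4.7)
My approach is to reduce the shortfall pricing problem to the pathwise superhedging problem for the shifted claim $\widetilde{\Phi} := \Phi + \beta$, where $\beta = U^{-1}(\alpha)$ is unambiguously defined by the strict monotonicity of $U$ in a neighborhood of $\beta$. Since $\widetilde{\Phi}$ inherits the linear growth and upper semi-continuity of $\Phi$, the model-independent superhedging duality of \cite{Beiglbock1} applies and yields
$$\inf\left\{\sum_{i=1}^n \mathbb{E}_{\mu_i}[u_i] : \exists\, \Delta,\ \Uppsi_{(u_i),(\Delta_j)} \geq \widetilde{\Phi} \text{ pointwise}\right\} = \sup_{\mathbb{Q}\in\mathcal{M}} \mathbb{E}^\mathbb{Q}[\widetilde{\Phi}] = \sup_{\mathbb{Q}\in\mathcal{M}} \mathbb{E}^\mathbb{Q}[\Phi] + \beta = D,$$
with the supremum attained. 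Showing that $C$ coincides with this quantity will establish both the duality and the attainment claim.

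For the inequality $C \leq D$, I would take any semi-static portfolio $(u_i,\Delta_j)$ that pathwise superhedges $\widetilde{\Phi}$, so that $\Uppsi - \Phi \geq \beta$ everywhere. By monotonicity of $U$, this forces $U(\Uppsi - \Phi) \geq U(\beta) = \alpha$ pointwise; hence $\mathbb{E}^\mathbb{P}[U(\Uppsi - \Phi)] \geq \alpha$ for every probability $\mathbb{P}$, in particular for all $\mathbb{P}\in\mathcal{P}$. So such a strategy is admissible for the shortfall problem, and the inequality follows by taking infima.

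For $C \geq D$, fix any $\mathbb{Q}\in\mathcal{M}$ and any admissible $(u_i,\Delta_j,\Delta)$. Since $\mathbb{Q}$ is a martingale with marginals $\mu_i$, we have $\mathbb{E}^\mathbb{Q}[\Uppsi_{(u_i),(\Delta_j)}] = \sum_i \mathbb{E}_{\mu_i}[u_i]$; the stochastic-integral term integrates to zero and the static term by the prescribed marginals. Because $\mathcal{M}\subseteq \mathcal{P}$, the admissibility constraint gives $\mathbb{E}^\mathbb{Q}[U(\Uppsi-\Phi)] \geq \alpha$, and Jensen's inequality for the concave $U$ then yields $U(\mathbb{E}^\mathbb{Q}[\Uppsi-\Phi]) \geq \alpha = U(\beta)$. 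The delicate step, which I expect to be the only real obstacle, is to invert $U$ here: for a merely concave nondecreasing $U$, $U(x) \geq U(\beta)$ does not force $x \geq \beta$ if $U$ is flat at level $\alpha$ on an interval to the left of $\beta$. This is precisely what the strict monotonicity hypothesis rules out: if $x<\beta$, then either $x\in(\beta-\delta,\beta)$ and strict monotonicity gives $U(x)<U(\beta)=\alpha$, or $x\leq \beta-\delta$ and monotonicity combined with the strict increase on $(\beta-\delta,\beta)$ gives $U(x)\leq U(\beta-\delta)<\alpha$. Hence $\mathbb{E}^\mathbb{Q}[\Uppsi-\Phi]\geq \beta$, so $\sum_i \mathbb{E}_{\mu_i}[u_i] \geq \mathbb{E}^\mathbb{Q}[\Phi] + \beta$, and taking the supremum over $\mathbb{Q}\in\mathcal{M}$ gives $C \geq D$. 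Attainment of the sup transfers directly from the cited superhedging duality since shifting by the constant $\beta$ does not change the maximizer.
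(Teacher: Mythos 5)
Your proof is correct and follows essentially the same route as the paper: the $C\geq D$ direction is obtained via Jensen's inequality applied under an arbitrary $\mathbb{Q}\in\mathcal{M}\subseteq\mathcal{P}$, and the $C\leq D$ direction reduces to the model-independent superhedging duality of Beiglb\"ock--Henry-Labord\`ere--Penkner applied to the shifted claim $\Phi+\beta$, with attainment inherited from that result. The only place you go beyond the paper is in spelling out why $U(x)\geq U(\beta)$ forces $x\geq\beta$ under the local strict-monotonicity hypothesis; the paper leaves this implicit in the notation $U^{-1}(\alpha)$, and your explicit case analysis is a welcome clarification of exactly where that hypothesis is used.
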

\begin{proof}
Denote
$$A:=\left\{\sum_{i=1}^n\mathbb{E}_{\mu_i}[u_i]: \exists\Delta\ \text{s.t. } \inf_{\mathbb{P}\in\mathcal{P}}\mathbb{E}^{\mathbb{P}}[U(\Uppsi_{(u_i),(\Delta_j)}-\Phi)]\geq\alpha\right\}.$$
For any $\sum_{i=1}^n\mathbb{E}_{\mu_i}[u_i]\in A$, there exists $\Delta$, such that $\forall\mathbb{Q}\in\mathcal{M}$,
$$U\left(\sum_{i=1}^n\mathbb{E}_{\mu_i}[u_i]-\mathbb{E}^{\mathbb{Q}}\Phi\right)=U\left(\mathbb{E}^{\mathbb{Q}}\left[\Uppsi_{(u_i),(\Delta_j)}-\Phi\right]\right)\geq\mathbb{E}^{\mathbb{Q}}\left[U(\Uppsi_{(u_i),(\Delta_j)}-\Phi)\right]\geq\alpha,$$
due to Jensen's Inequality. Hence, $\sum_{i=1}^n\mathbb{E}_{\mu_i}[u_i]\geq\mathbb{E}^{\mathbb{Q}}\Phi+U^{-1}(\alpha), \ \forall\mathbb{Q}\in\mathcal{M}$, which implies $C\geq D$.

Applying Corollary 1.2 in \cite{Beiglbock1}, we know that the supremum in the definition of $D$ is attained and\begin{equation*}
\begin{split}
D=&\inf\left\{\sum_{i=1}^n\mathbb{E}_{\mu_i}[u_i]: \exists\Delta\ \text{s.t. } \Uppsi_{(u_i),(\Delta_j)}\geq\Phi+U^{-1}(\alpha)\right\}\\
=&\inf\left\{\sum_{i=1}^n\mathbb{E}_{\mu_i}[u_i]: \exists\Delta \text{ s.t. } U(\Uppsi_{(u_i),(\Delta_j)}-\Phi)\geq\alpha\right\}.
\end{split}
\end{equation*}
Denote the above set inside \lq\lq inf\rq\rq\ by $B$. Fix an arbitrary $\eps>0$, and let  $D+\eps>\sum_{i=1}^n\mathbb{E}_{\mu_i}[u_i]\in B$. Then there exists $\Delta$, such that $U(\Uppsi_{(u_i),(\Delta_j)}-\Phi)\geq\alpha$, which implies $\mathbb{E}^{\mathbb{P}}[U(\Uppsi_{(u_i),(\Delta_j)}-\Phi)]\geq\alpha$ for any probability measure $\mathbb{P}$. Therefore, $\sum_{i=1}^n\mathbb{E}_{\mu_i}[u_i]\in A$. Thus, $D+\eps>\sum_{i=1}^n\mathbb{E}_{\mu_i}[u_i]\geq C$. As a result we have that $D+\eps> C$ for all $\eps>0$, which implies $D\geq C$.
\end{proof}

\begin{remark}
This result is a generalization of Theorem 4.1 in  \cite{Nutz1} to the framework of \cite{Beiglbock1}. That is we allow the volatility to be uncertain, but restrict the set of probability measures by allowing the hedging strategies to use static trading in options.

\end{remark}

\begin{theorem}[Pricing by Quantiles]
Let $\Phi: \mathbb{R}_+^n\rightarrow\mathbb{R}$ be a continuous function such that
$$0\leq\Phi(s_1,\dotso,s_n)\leq K\cdot(1+s_1+\dotso+s_n),$$
for some constant $K$. Let $\mathcal{P}$ be any set of probability measures on $\mathbb{R}_+^n$ and $\alpha\in[0,1]$. Define
$$\mathcal{A}(\mathcal{P},\alpha):=\left\{H\in\mathcal{F}\text{ closed }:\ \inf_{\mathbb{P}\in\mathcal{P}}\mathbb{P}(H)\geq\alpha\right\}.$$
We require our semi-static hedging strategies $u_i,\ i=1\dotso,n$, and $\Delta_j,\ j=1,\dotso,n-1$, to be continuous functions. Then the following holds:
\begin{equation}\label{eq:quantile-hed}
\begin{split}
I:=&\inf\left\{\sum_{i=1}^n\mathbb{E}_{\mu_i}[u_i]: \exists\Delta,\ \text{s.t. } \Uppsi_{(u_i),(\Delta_j)}\geq0,\text{ and }\inf_{\mathbb{P}\in\mathcal{P}}\mathbb{P}\left(\Uppsi_{(u_i),(\Delta_j)}\geq\Phi\right)\geq\alpha\right\} \\
=&\inf_{H\in\mathcal{A}(\mathcal{P},\alpha)}\sup_{\mathbb{Q}\in\mathcal{M}}\mathbb{E}^{\mathbb{Q}}[\Phi 1_H]=:J\\
\end{split}
\end{equation}
\end{theorem}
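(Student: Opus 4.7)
The plan is to establish the two inequalities $I\geq J$ and $J\geq I$ separately, with both directions using the model-independent super-hedging duality of \cite{Beiglbock1} (Corollary 1.2) as the bridge between the primal semi-static strategies and the dual martingale measures. The central idea is that the optimal ``success set'' $H:=\{\Uppsi\geq\Phi\}$ turns a quantile-hedging problem into the super-hedging problem for the knockout payoff $\Phi\cdot 1_H$.

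For $I\geq J$, I would start with any semi-static strategy $\Uppsi=\Uppsi_{(u_i),(\Delta_j)}$ feasible for the problem $I$. Since $u_i$, $\Delta_j$ and $\Phi$ are all continuous, $\Uppsi$ is continuous and the success set $H:=\{\Uppsi\geq\Phi\}$ is closed; the quantile constraint $\inf_{\mathbb{P}\in\mathcal{P}}\mathbb{P}(\Uppsi\geq\Phi)\geq\alpha$ then places $H$ into $\mathcal{A}(\mathcal{P},\alpha)$. Using $\Uppsi\geq 0$ globally and $\Uppsi\geq\Phi$ on $H$, together with $\Phi\geq 0$, a simple case split yields $\Uppsi\geq\Phi\cdot 1_H$ pointwise, so $\Uppsi$ super-replicates the upper semi-continuous, linearly-bounded payoff $\Phi\cdot 1_H$. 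Applying Corollary 1.2 of \cite{Beiglbock1} gives
$$\sum_{i=1}^n\mathbb{E}_{\mu_i}[u_i]\geq\sup_{\mathbb{Q}\in\mathcal{M}}\mathbb{E}^{\mathbb{Q}}[\Phi\cdot 1_H]\geq J,$$
and taking the infimum over feasible strategies gives $I\geq J$.

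For $J\geq I$, I would fix an arbitrary $H\in\mathcal{A}(\mathcal{P},\alpha)$ and $\eps>0$, and invoke the super-hedging duality for the upper semi-continuous payoff $\Phi\cdot 1_H$, which inherits the linear growth bound from $0\leq \Phi\cdot 1_H\leq \Phi$. This produces a semi-static strategy with payoff $\widetilde{\Uppsi}\geq\Phi\cdot 1_H\geq 0$ whose cost is at most $\sup_{\mathbb{Q}\in\mathcal{M}}\mathbb{E}^{\mathbb{Q}}[\Phi\cdot 1_H]+\eps$. On $H$ we have $\widetilde{\Uppsi}\geq\Phi$, hence $\inf_{\mathbb{P}\in\mathcal{P}}\mathbb{P}(\widetilde{\Uppsi}\geq\Phi)\geq\inf_{\mathbb{P}\in\mathcal{P}}\mathbb{P}(H)\geq\alpha$, making $\widetilde{\Uppsi}$ feasible for $I$. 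Letting $\eps\downarrow 0$ and taking the infimum over $H$ then yields $I\leq J$.

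The main obstacle is the continuity requirement on $u_i$ and $\Delta_j$ imposed by the theorem. In the direction $I\geq J$ this is used constructively, only to ensure that the success set $H$ is closed; the super-replicating witness we invoke from the duality need not be continuous. In the direction $J\geq I$, however, the super-hedging strategy provided by \cite{Beiglbock1} is a priori only $\mu_i$-integrable / bounded Borel, and must be upgraded to continuous functions. I would handle this by a mollification/uniform approximation of the $u_i$ and $\Delta_j$, adding a small constant cushion to preserve the pointwise super-replication $\widetilde{\Uppsi}\geq\Phi\cdot 1_H$, and absorbing the extra cost into $\eps$. The upper semi-continuity of $\Phi\cdot 1_H$ (product of the continuous nonnegative $\Phi$ with the indicator of a closed set) together with its linear growth are the key regularity ingredients that make this approximation feasible.
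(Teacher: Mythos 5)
Your argument is correct and matches the paper's proof essentially step for step: both directions turn on the closed success set $H=\{\Uppsi_{(u_i),(\Delta_j)}\geq\Phi\}$, the pointwise domination $\Uppsi_{(u_i),(\Delta_j)}\geq\Phi\,1_H$, and the super-hedging duality of \cite{Beiglbock1} applied to the knockout payoff $\Phi\,1_H$. The one point you flag as an obstacle --- upgrading the dual super-replicating strategy to continuous $u_i,\Delta_j$ in the $J\geq I$ direction --- is already built into the cited result (the discussion preceding \cite[Corollary~1.2]{Beiglbock1} shows the duality is attained over continuous strategies), so the mollification step you sketch, while workable, is unnecessary; similarly, in the $I\geq J$ direction the full strength of Corollary~1.2 is overkill, since weak duality, namely $\sum_i\mathbb{E}_{\mu_i}[u_i]=\mathbb{E}^{\mathbb{Q}}[\Uppsi_{(u_i),(\Delta_j)}]\geq\mathbb{E}^{\mathbb{Q}}[\Phi\,1_H]$ for every $\mathbb{Q}\in\mathcal{M}$, already gives the inequality as in the paper.
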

\begin{proof}
For $H\in\mathcal{A}(\mathcal{P},\alpha)$, let $J(H):=\sup_{\mathbb{Q}\in\mathcal{M}}\mathbb{E}^{\mathbb{Q}}[\Phi 1_H]$. Since $H$ is closed and $\Phi$ is upper semi-continuous we can apply  in \cite[Corollary 1.2]{Beiglbock1} (and the explanation before this result where it is argued that it is sufficient to take $u_i$ and $\Delta_j$ to be continuous), to obtain
\begin{equation*}
\begin{split}
J(H)=&\inf\left\{\sum_{i=1}^n\mathbb{E}_{\mu_i}[u_i]: \exists\Delta,\ \text{s.t. } \Uppsi_{(u_i),(\Delta_j)}\geq\Phi 1_H\right\}\\
\geq&\inf\left\{\sum_{i=1}^n\mathbb{E}_{\mu_i}[u_i]: \exists\Delta,\ \text{s.t. } \Uppsi_{(u_i),(\Delta_j)}\geq0,\text{ and }\inf_{\mathbb{P}\in\mathcal{P}}\mathbb{P}\left(\Uppsi_{(u_i),(\Delta_j)}\geq\Phi\right)\geq\alpha\right\}=I,
\end{split}
\end{equation*}
which implies $J\geq I$.

For $\eps>0$, let $\sum_{i=1}^n\mathbb{E}_{\mu_i}[u_i]\in[I,I+\eps)$ be such that there exists $\Delta$, satisfying $\Uppsi_{(u_i),(\Delta_j)}\geq0$ and $\inf_{\mathbb{P}\in\mathcal{P}}\mathbb{P}(\Uppsi_{(u_i),(\Delta_j)}\geq\Phi)\geq\alpha$. Define $H:=\{\Uppsi_{(u_i),(\Delta_j)}\geq\Phi\}$. By the upper semi-continuity of $\Uppsi$ and the lower semi-continuity of $\Phi$, we know that $H$ is closed. Then $H\in\mathcal{A}(\mathcal{P},\alpha)$ and $\Uppsi_{(u_i),(\Delta_j)}\geq\Phi 1_H$. Hence,
$$I+\eps>\sum_{i=1}^n\mathbb{E}_{\mu_i}[u_i]=\sup_{\mathbb{Q}\in\mathcal{M}}\mathbb{E}^{\mathbb{Q}}\left[\Uppsi_{(u_i),(\Delta_j)}\right]\geq\sup_{\mathbb{Q}\in\mathcal{M}}\mathbb{E}^{\mathbb{Q}}[\Phi 1_H]\geq J.$$
\end{proof}
\begin{remark}
In order to solve the \lq\lq inf sup\rq\rq\ problem in the first line of \eqref{eq:quantile-hed} the Neyman-Pearson Lemma needs to be generalized to the setting of \cite{Beiglbock1}, the case in which a dominating measure is absent. We leave this for future work.
\end{remark}

\bibliographystyle{siam}
\bibliography{ref}

\end{document}